\title{Ramanujan Graphs in Polynomial Time}
\author{
  Michael B. Cohen\\
  M.I.T.\\
  \texttt{micohen@mit.edu}
}
\newtheorem{theorem}{Theorem}[section]
\newtheorem{lemma}{Lemma}[section]
\theoremstyle{definition}
\newtheorem{definition}{Definition}
\begin{document}
\maketitle

\begin{abstract}
The recent work \cite{mss4} by Marcus, Spielman and Srivastava proves the existence of bipartite Ramanujan (multi)graphs of all degrees and all sizes.  However, that paper did not provide a polynomial time algorithm to actually compute such graphs.  Here, we provide a polynomial time algorithm to compute certain expected characteristic polynomials related to this construction.  This leads to a deterministic polynomial time algorithm to compute bipartite Ramanujan (multi)graphs of all degrees and all sizes.

Note: after writing this paper, the author became aware that others have independently produced this result, but does not know any specifics.
\end{abstract}

\section{Introduction}
Bipartite Ramanujan graphs can be defined as undirected bipartite graphs with constant degree $d$ such that all eigenvalues of the adjacency matrix, except the trivial $\pm d$, have absolute value at most $2 \sqrt{d-1}$.  It is known that this is the smallest bound for which an infinite family--that is, a set of such graphs of degree $d$ containing graphs of arbitrarily large size--can exist (originally due to Alon and Boppana).  Explicit algebraic constructions, naturally providing polynomial time algorithms, of such graphs of certain particular degrees and sizes have been known since the 1980s.  However, until recently it was not known whether infinite families of bipartite Ramanujan graphs existed for all degrees.

Recent papers \cite{mss1,mss4,coverings}, however, have proved the existence of these graphs for all degrees using a powerful new tool known as ``interlacing families.''  They proceed by bounding the roots of the \emph{expected characteristic polynomials} of the adjacency matrices of certain random graphs, then showing that at least one specific graph must satisfy the same bound.  This interlacing family method naturally provides an algorithm to find such a graph.  Unfortunately, the steps of this natural algorithm involve computing certain partially specified expected characteristic polynomials.  There were no known polynomial time algorithms to compute these, so none of these papers provided a polynomial time algorithm to compute Ramanujan graphs.

In this paper, for \cite{mss4} specifically, we provide a polynomial time algorithm to explicitly compute the needed polynomials by reducing to the computation of a certain symbolic determinant.  This provides a polynomial time algorithm to compute the graphs from that paper.  It inherits the mild caveat of \cite{mss4} that the resulting graph may have repeated edges, producing what could be more properly called Ramanujan \emph{multigraphs}.

Additionally, it is still subject to the limitations of the interlacing polynomial method, which does not seem to easily allow producing \emph{non-bipartite} Ramanujan graphs.  Furthermore, this new algorithm does not seem likely to be helpful for \cite{mss1} and some other uses of interlacing families, such as the solution of the Kadison-Singer problem \cite{mss2}.  This is because it is an explicit algorithm for computing the polynomials in the interlacing family, which is known, in the cases of those results, to be \#P-hard.

\section{Interlacing Families}
The construction of \cite{mss4} and the construction of this paper are heavily based on the notion of an \emph{interlacing family}.  We will use a slightly more general definition than is used in that paper, and the specific interlacing family will also be slightly different.

\begin{definition}
An \emph{interlacing family} is a rooted tree with a polynomial $p_k(x)$ assigned to each node $k$, such that:
\begin{enumerate}
\item
All of the polynomials are monic, real-rooted and of the same degree.
\item
For any non-leaf node $k$, $p_k$ is a positive linear combination of the polynomials assigned to the children of $k$.
\item
If two nodes $k$ and $k'$ have the same parent, $p_k$ and $p_{k'}$ have a common interlacing.  That is, the $(j+1)$st root of $p_k$ is not less than the $j$th root of $p_{k'}$ or vice versa.
\end{enumerate}
\end{definition}
For convenience, we will refer to tree nodes in the interlacing family as ``nodes'' and vertices of the graph as ``vertices''.

The relevance of interlacing families stems from the following key lemma (a rephrasing of Lemma 4.2 from \cite{mss1}).
\begin{lemma}
\label{interlacingavg}
Let the real-rooted polynomials $p_i(x)$ all have the same leading term with each pair of polynomials having a common interlacing, and let $p(x)$ be a positive linear combination of the $p_i$.  Then at least one of the $p_i(x)$ must have max root less than or equal to the max root of $p(x)$.
\end{lemma}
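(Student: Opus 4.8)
The plan is to exhibit the required $p_i$ explicitly: among the $p_i$, take one whose largest root is as small as possible, and show that $p$ must then have a root at least that large. After rescaling we may assume the common leading term is $x^n$ (all $p_i$ monic), so each $p_i$ and the positive combination $p$ tend to $+\infty$ as $x\to+\infty$. Let $p_1$ be a polynomial among the $p_i$ achieving $x_0 := \operatorname{maxroot}(p_1) = \min_i \operatorname{maxroot}(p_i)$.

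The heart of the argument is the claim that $p_i(x_0) \le 0$ for every $i$. For $i=1$ this holds because $x_0$ is a root of $p_1$. For $i\neq 1$, I would invoke the common interlacing of $p_1$ and $p_i$ through its elementary form: writing the roots of $p_1$ as $r_1\le\dots\le r_n$ and those of $p_i$ as $s_1\le\dots\le s_n$, having a common interlacing is equivalent to $s_j\le r_{j+1}$ and $r_j\le s_{j+1}$ for $j=1,\dots,n-1$. The inequality with $j=n-1$ gives $s_{n-1}\le r_n = x_0$, so every root of $p_i$ except possibly the largest lies in $(-\infty,x_0]$, while $s_n = \operatorname{maxroot}(p_i)\ge x_0$ by the choice of $p_1$. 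Expanding $p_i(x_0)=\prod_{j=1}^n (x_0-s_j)$, the first $n-1$ factors are nonnegative and the last is nonpositive, whence $p_i(x_0)\le 0$.

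Given the claim, $p(x_0)=\sum_i \mu_i p_i(x_0)\le 0$ since every $\mu_i>0$, and since $p(x)\to+\infty$ the intermediate value theorem yields a real root of $p$ in $[x_0,\infty)$; thus $\operatorname{maxroot}(p)\ge x_0 = \operatorname{maxroot}(p_1)$, which is exactly the assertion. I expect the only delicate point to be the bookkeeping around repeated roots and coincident interlacing points: one must check that the notion of common interlacing really delivers $s_{n-1}\le x_0$, and that the factorization argument for the sign of $p_i(x_0)$ remains valid when $x_0$ itself is a (possibly multiple) root of $p_i$ — but in that degenerate case $p_i(x_0)=0$ and the bound is trivial, so nothing is lost. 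Notably the argument never needs the whole family $\{p_i\}$ to share a single common interlacing; the pairwise hypothesis in the definition used here is already enough.
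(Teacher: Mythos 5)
Your proof is correct. The paper itself does not prove this lemma (it cites Lemma 4.2 of \cite{mss1}), and your argument is essentially the standard one from that reference: show each $p_i$ is nonpositive at a suitable point $\ge$ all second-largest roots, conclude $p$ is nonpositive there too, and deduce by the intermediate value theorem that $p$ has a root at least that large. The only (harmless) variation is that you evaluate at $x_0 = \min_i \operatorname{maxroot}(p_i)$ and exhibit the minimizer itself as the witness, whereas the cited proof evaluates at the interlacing point $\alpha_{n-1}$ and then identifies a witness from the sign pattern at $\operatorname{maxroot}(p)$; your observation that only pairwise common interlacing with $p_1$ is needed is accurate and matches the definition actually used in this paper.
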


This allows one to obtain both existential and computational results from interlacing families:
\begin{lemma}
For any interlacing family with root $r$, there is a leaf node $l$ such that $p_l$ has max root less than or equal to that of $p_r$.
\end{lemma}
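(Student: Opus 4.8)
The plan is to descend the tree from the root, at each internal node passing to a child whose assigned polynomial has maximum root no larger than that of the current node, and then to argue that this descent reaches a leaf with the desired bound.

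Concretely, I would define a sequence of nodes $k_0, k_1, k_2, \dots$ by $k_0 = r$, maintaining the invariant that $\operatorname{maxroot}(p_{k_i}) \le \operatorname{maxroot}(p_r)$, which holds trivially at $i = 0$. If $k_i$ is a leaf, the descent stops and we set $l = k_i$; the invariant then gives exactly the claim. If $k_i$ is not a leaf, then by condition (2) of the definition $p_{k_i}$ is a positive linear combination of the polynomials $p_k$ assigned to the children $k$ of $k_i$; by condition (1) these children's polynomials are monic of a common degree, hence share a leading term; and by condition (3) each pair of them has a common interlacing. Thus the hypotheses of \Cref{interlacingavg} are satisfied, with its ``$p_i$'' being the children's polynomials and its ``$p$'' being $p_{k_i}$, and the lemma yields a child $k_{i+1}$ with $\operatorname{maxroot}(p_{k_{i+1}}) \le \operatorname{maxroot}(p_{k_i}) \le \operatorname{maxroot}(p_r)$, reestablishing the invariant. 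Since the tree is finite, this process cannot continue forever, so it terminates at a leaf $l$, which is the node we want.

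I do not expect a genuine obstacle here; the only point requiring a moment's care is verifying that the common-leading-term hypothesis of \Cref{interlacingavg} is actually available at each internal node, but this is immediate from monicity together with equal degree in condition (1). (One should also note that ``rooted tree'' is taken to mean a finite one, as it is in all the applications, so that the descent terminates.)
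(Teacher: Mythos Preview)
Your proposal is correct and essentially identical to the paper's proof: both descend from the root, at each internal node invoking \Cref{interlacingavg} to pass to a child whose max root does not exceed the current node's, and conclude once a leaf is reached. The paper phrases the choice as ``the child with minimum max root'' and notes the max root is monotonically nonincreasing, but this is the same argument as yours.
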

\begin{proof}
For any non-leaf node $k$, $k$ must have a child $i$ such that the max root of $p_i$ is less than or equal to the max root of $p_k$, by \Cref{interlacingavg}.  We can then consider the following process: begin with $k = r$.  As long as $k$ is not a leaf, replace $k$ with the child $i$ of $k$ such that $p_i$ has minimum max root.  Once $k$ is a leaf, terminate and return $k$.

The max root of the current node $k$ is monotonically decreasing as this process progresses, so this always returns a leaf node with max root less than or equal to that of $p_r$.
\end{proof}

Now we consider a computational version of this statement.  We define a \emph{computational representation} as a polynomial-time algorithm to exactly compute the coefficients of each polynomial $p_k$ (expressed as rational numbers, with a polynomially bounded number of bits), plus a polynomial time algorithm to list all children of a node.

First, we will need an algorithm to test whether the max root of a polynomial is at most $\sqrt{q}$, for some integer $q$.
\begin{lemma}
\label{testroot}
Let $p$ be a real-rooted polynomial with explicitly given rational coefficients, and $q$ be an integer.  Then one can test whether the max root of $p$ is $\leq \sqrt{q}$ in polynomial time.
\end{lemma}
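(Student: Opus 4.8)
The plan is to reduce the question to finitely many sign computations on numbers lying in the quadratic field $\mathbb{Q}(\sqrt{q})$. First I would normalize: divide $p$ by its leading coefficient so that $p$ is monic. This is exact rational arithmetic, increases the bit-length only polynomially, and does not change the roots. If $q$ happens to be a perfect square then $\sqrt{q}\in\mathbb{Z}$ and everything below is ordinary integer/rational arithmetic, so the only genuinely new ingredient is the case where $\sqrt{q}$ is irrational. There I would represent an element of $\mathbb{Q}(\sqrt{q})$ as $a+b\sqrt{q}$ with $a,b\in\mathbb{Q}$ of polynomially bounded bit-length, and observe that addition, multiplication and — crucially — sign determination of such elements are polynomial time: the sign of $a+b\sqrt{q}$ is read off from the signs of $a$, of $b$, and of the rational number $a^2-b^2q$ (handling the cases $a,b$ same sign, opposite signs, or one of them zero).

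The key structural fact I would establish is a characterization of ``max root $\le t$'' for a monic real-rooted polynomial $p$ of degree $n\ge 1$: the largest root of $p$ is $\le t$ if and only if $p^{(k)}(t)\ge 0$ for every $k=0,1,\dots,n$. For the forward direction, note that the derivative of a real-rooted polynomial is again real-rooted and its roots lie in the convex hull of the roots of $p$ (Rolle's theorem plus a multiplicity count), so inductively the largest root of each $p^{(k)}$ is at most the largest root of $p$, which is $\le t$; since $p$ is monic, each $p^{(k)}$ has positive leading coefficient and is therefore $\ge 0$ to the right of its largest root, in particular at $t$. For the converse, expand $p$ in a Taylor series about $t$, namely $p(x)=\sum_{k=0}^{n}\frac{p^{(k)}(t)}{k!}(x-t)^k$: if all these coefficients are nonnegative, then for $x>t$ every summand is $\ge 0$ and the top summand $(x-t)^n$ is strictly positive, so $p(x)>0$ and $p$ has no root in $(t,\infty)$.

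Given this, the algorithm is immediate: compute the coefficient vectors of $p,p',\dots,p^{(n-1)}$ (the $k=n$ condition is just ``leading coefficient positive,'' which holds after normalization), evaluate each at $t=\sqrt{q}$ by Horner's rule in $\mathbb{Q}(\sqrt{q})$ to get values $a_k+b_k\sqrt{q}$, determine the sign of each, and answer ``yes'' exactly when all of them are $\ge 0$. Each derivative has coefficients of polynomially bounded bit-length, each power $q^j$ with $j\le n$ has $O(n\log q)$ bits, so all intermediate quantities and the final sign tests are of polynomial size, giving a polynomial-time procedure.

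I expect the only real subtlety — the ``hard part,'' such as it is — to be the bookkeeping around the irrationality of $\sqrt{q}$: being careful that the sign-of-$(a+b\sqrt{q})$ routine is correct in every sign case, and noting that one never has to decide equality of two genuinely close irrational quantities, since each comparison collapses to the sign of a single rational number $a^2-b^2q$. Everything else — the derivative/Taylor characterization and the complexity accounting — is routine. (As an alternative to the derivative characterization one could run a Sturm sequence for the squarefree part of $p$ and compare the sign-variation counts at $\sqrt{q}$ and at $+\infty$; this is also polynomial time over $\mathbb{Q}(\sqrt{q})$, but the derivative formulation seems cleanest.)
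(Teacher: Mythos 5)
Your proposal is correct and is essentially the paper's argument in different clothing: the condition $p^{(k)}(\sqrt{q})\ge 0$ for all $k$ is exactly the nonnegativity of the Taylor coefficients of $p$ at $\sqrt{q}$, i.e.\ of the coefficients of the shifted polynomial $p(x+\sqrt{q})$ that the paper tests, and your sign routine for $a+b\sqrt{q}$ (comparing $a^2$ with $b^2 q$ in the mixed-sign case) is identical to the paper's. The only cosmetic difference is that you prove the forward direction of the equivalence via Rolle's theorem on successive derivatives, whereas the paper uses the factorization of the shifted polynomial into linear factors $(x-y)$ with $y\le 0$; both are fine.
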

\begin{proof}
We compute the polynomial $p'(x) = p(x+\sqrt{q})$; we can compute the coefficients of this explicitly in the form $a + b \sqrt{q}$ (i.e. work over $\mathbb{Q}[\sqrt{q}]$), where $a$ and $b$ are rational numbers.  $p$ then has max root $\leq \sqrt{q}$ if and only if $p'$ has max root $\leq 0$ (that is, all of its roots are non-positive).  Since $p'$ is real-rooted (as $p$ is real-rooted), this occurs if and only if all coefficients of $p$ are nonnegative: the ``if'' direction follows from the fact that nonnegative coefficients imply that $p'(x) > 0$ for all $x > 0$, while the ``only if'' direction follows from the factorization $p'(x) = \prod_{\textrm{roots } y} (x-y)$.  Finally, it remains to be able to determine whether $a + b \sqrt{q} \geq 0$ for rational numbers $a$ and $b$.  Here, if $a$ and $b$ have the same sign this is immediate.  If $a$ and $b$ have opposite signs, then $a + b \sqrt{q} \geq 0$ if and only if $a \geq -b \sqrt{q}$.  Squaring both sides, we see that this in turn is true if and only if $a^2 \geq b^2 q$ when $a$ is positive, or $a^2 \leq b^2 q$ if $a$ is negative.
\end{proof}

\begin{lemma}
\label{interlacingalgo}
For any interlacing family root $r$, with polynomially bounded depth and a computational representation, and where $p_r$ has max root at most $\sqrt{q}$ for some integer $q$, then there is a polynomial time algorithm to report a leaf $l$ such that $p_l$ has max root $\leq \sqrt{q}$.
\end{lemma}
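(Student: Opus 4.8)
The plan is to adapt the walk-down argument from the proof of the preceding (purely existential) lemma, replacing the step ``move to the child with smallest max root'' — which requires knowing roots exactly — by a step that can actually be carried out: ``move to any child whose max root is at most $\sqrt q$.'' Throughout the walk we maintain the \emph{invariant} that the current node $k$ satisfies that the max root of $p_k$ is at most $\sqrt q$; this holds at the start, since then $k = r$.

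Now I describe one step. Suppose $k$ is the current node and is not a leaf. Using the computational representation, we list all children $i_1,\dots,i_m$ of $k$; there are only polynomially many of them, since they can be enumerated in polynomial time, and we can compute the rational coefficients of each $p_{i_j}$ in polynomial time, each with polynomially many bits. For each child we then apply \Cref{testroot} with this value of $q$ to decide whether the max root of $p_{i_j}$ is at most $\sqrt q$; this is legitimate because every polynomial appearing in an interlacing family is real-rooted by definition. Since $p_k$ is a positive linear combination of the $p_{i_j}$, all of which are monic of the same degree with pairwise common interlacings, \Cref{interlacingavg} guarantees that at least one child $i_j$ has max root at most that of $p_k$, hence at most $\sqrt q$; so at least one of the tests succeeds. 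We set $k$ to be any such child, which preserves the invariant.

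It remains to address termination and running time. Because the tree has polynomially bounded depth, the walk reaches a leaf after polynomially many steps, and each step performs only polynomially much work: enumerating the (polynomially many) children, computing their coefficients, and running the polynomial-time test of \Cref{testroot} on each. When the walk stops, the current node is a leaf $l$, and the invariant gives that the max root of $p_l$ is at most $\sqrt q$, as required.

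The point needing the most care — and the reason \Cref{testroot} is invoked at all — is that the existential argument only tells us that a good child \emph{exists}; to obtain an algorithm we must actually identify one. The key observation is that at no stage do we need to compare two (possibly irrational) max roots against each other: it always suffices to compare each against the single fixed threshold $\sqrt q$, and \Cref{testroot} does exactly this, working over $\mathbb{Q}[\sqrt q]$ in polynomial time. The only remaining thing to verify is that all relevant quantities stay polynomially bounded, which is precisely what the definition of a computational representation (polynomial-time computation of polynomially-sized coefficients, plus polynomial-time enumeration of children) provides.
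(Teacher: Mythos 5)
Your proof is correct and follows essentially the same approach as the paper: walk down the tree from the root, maintain the invariant that the current node's max root is at most $\sqrt{q}$, use \Cref{testroot} to identify a child satisfying the threshold (which exists by \Cref{interlacingavg}), and bound the number of iterations by the depth. Your additional remarks on why only threshold comparisons against $\sqrt{q}$ (never comparisons between irrational roots) are needed are a helpful elaboration but do not change the argument.
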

\begin{proof}
The proof is essentially the same.  Again, we track a ``current'' node $k$, which is initialized to $r$; we will maintain the invariant that the max root of $p_k$ is at most $\sqrt{q}$.  As long as $k$ is not a leaf, we list its children $i$, and compute $p_i$ for each.  Next, we test whether each of these $p_i$ has max root $\leq \sqrt{q}$ using \Cref{testroot}.  By \Cref{interlacingavg}, at least one of these children must have max root at most $\sqrt{q}$; we then set $k$ to this child, maintaining the invariant.  Once $k$ is a leaf, we terminate and return it.  The total number of iterations is at most the depth of the tree.
\end{proof}

\section{The Matching Interlacing Family}
Our aim is to use an interlacing family to find a bipartite Ramanujan graph of degree $d$ on $n$ vertices.

Here, we are interested in obtaining a graph whose adjacency matrix has bounded max nontrivial eigenvalue--we crucially use the fact that the eigenvalues of the adjacency matrix of a bipartite graph are symmetric around 0, and thus if the maximum nontrivial eigenvalue is at most $2 \sqrt{d-1}$, the minimum is at least $-2 \sqrt{d-1}$ and the graph is Ramanujan.

To get this out of an interlacing family, we would like the leaves of that family to correspond to the characteristic polynomials, divided by the factors from the trivial eigenvalues, of the adjacency matrices of $d$-regular bipartite graphs on $n$ vertices.  We will construct such an interlacing family, which we will call the Matching Interlacing Family.  This family is slightly different from the one presented in \cite{mss4}, since that one has an exponentially deep tree and so is not directly suitable for algorithmic use.

To get this, we will look at the combination of $d$ perfect bipartite matchings.  We index the vertices of the graph so that vertices 1 through $\frac{n}{2}$ are on one side of the bipartition and vertices $\frac{n}{2}+1$ through $n$ are on the other.  Then we define a \emph{partially specified matching} as a bipartite matching that matches vertices 1 through $t$, for some $0 \leq t < \frac{n}{2}$, and does not match any vertices from $t+1$ through $\frac{n}{2}$.

Then we will define a node in our family as corresponding to a sequence of $r$ bipartite matchings, for $r \leq d$, such that all except the last are complete and the last is either a complete bipartite matching or a partially specified matching.  The children of a node are those that add the next unmatched vertex to the partially specified matching, or begin a new partially specified matching (matching vertex 1) if the last matching is complete.  The leaves are sequences of $d$ perfect matchings, which combine to a $d$-regular bipartite graph on $n$ vertices.

For each node in the interlacing family, we furthermore view it as corresponding to a probability distribution over $d$-regular bipartite graphs on $n$ vertices.  This is simply uniformly randomly completing the partially specified matching if present, then adding $d-r$ uniformly random complete bipartite matchings.  It may be immediately seen that the probability distribution assigned to a node is the average of the distributions assigned to its children, and the root is the combination of $d$ random bipartite graphs.

Finally, we define the polynomial assigned to a node as the \emph{expected} characteristic polynomial, over the distribution of graphs associated with the node, of the adjacency matrix, divided by the factors from the trivial eigenvalues $(x+d)(x-d)$.

\begin{lemma}
\label{matchinginterlacing}
The Matching Interlacing Family is an interlacing family.
\end{lemma}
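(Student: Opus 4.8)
The plan is to verify, in turn, the three axioms defining an interlacing family, with essentially all of the difficulty concentrated in the real-rootedness requirement.

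\textbf{The easy axioms.} First I would record that for every $d$-regular bipartite (multi)graph on $n$ vertices the all-ones vector, and the all-ones vector with signs flipped on one side of the bipartition, are eigenvectors of the adjacency matrix with eigenvalues $d$ and $-d$, so $x^2-d^2$ divides its characteristic polynomial. Averaging over the distribution attached to a node and using linearity of expectation, the expected characteristic polynomial is monic of degree $n$ and is still divisible by $x^2-d^2$ (an average of multiples of $x^2-d^2$ is again a multiple of it), so every $p_k$ is well defined, monic, and of degree $n-2$; in particular all of them share the same degree and leading term. For the positive-combination axiom I would use that, by construction, the distribution at a non-leaf node is the \emph{uniform} average of the distributions at its children: conditioning the random completion of the current partial matching---or the choice of a fresh perfect matching---on the partner assigned to the next vertex to be matched splits it into equally weighted pieces, one per child. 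Linearity of expectation then shows the expected characteristic polynomial of a node is the uniform average of those of its children, and dividing by the fixed factor $x^2-d^2$ commutes with this averaging, so $p_k$ is a positive combination of the children's polynomials. (When the current partial matching already matches all but one vertex the node has a single child and this axiom, together with the next one, is vacuous.)

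\textbf{Real-rootedness (the crux).} Since $p_k$ is obtained from the expected characteristic polynomial $E_k$ of the node's graph distribution by dividing out the fixed real-rooted factor $x^2-d^2$ (whose roots $\pm d$ are in any case roots of $E_k$), and removing real roots from a real-rooted polynomial preserves real-rootedness, it suffices to prove that $E_k$ itself is real-rooted. The distribution at a node fixes some perfect matchings, takes a uniformly random completion of one partial matching, and takes the remaining matchings uniformly at random, all independently---exactly the sort of conditioned union-of-random-matchings distribution for which \cite{mss4} establishes real-rootedness of the expected characteristic polynomial via its real-stability / mixed-characteristic-polynomial arguments. Our nodes realize the same family of distributions, merely organized into a shallower tree, so (a routine adaptation of) that argument applies. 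This is the main obstacle: it is the one place where the nontrivial machinery of \cite{mss4} is genuinely needed, and everything else is bookkeeping.

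\textbf{Common interlacing of siblings.} Here I would invoke the standard equivalence that two monic real-rooted polynomials of the same degree have a common interlacing if and only if every nonnegative linear combination of them is real-rooted. So fix siblings $k$ and $k'$ with a common parent, with associated graph distributions $\mathcal D_k$ and $\mathcal D_{k'}$. Any nonnegative combination $\alpha p_k + \beta p_{k'}$ is, up to the positive scalar $\alpha+\beta$, the $x^2-d^2$-quotient of the expected characteristic polynomial of the mixture distribution $\tfrac{\alpha}{\alpha+\beta}\mathcal D_k + \tfrac{\beta}{\alpha+\beta}\mathcal D_{k'}$; this mixture is again an admissible distribution of the same kind---it only replaces a uniform choice of the next matched pair (or of the first edge of a new matching) by an arbitrary distribution over two such choices---and the real-rootedness argument of the previous paragraph never used uniformity of that choice. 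Hence $\alpha p_k + \beta p_{k'}$ is real-rooted for all $\alpha,\beta\ge 0$, so $k$ and $k'$ have a common interlacing, and all three axioms hold.
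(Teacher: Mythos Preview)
Your proposal is correct and follows essentially the same route as the paper: both handle the easy axioms via the averaging structure and the common factor $(x-d)(x+d)$, then obtain real-rootedness from the random-swap theorem of \cite{mss4} (the paper's Theorem~3.2 together with Lemma~3.3), and deduce common interlacing of siblings from the convex-combination criterion (Lemma~3.2 here). The only difference is that the paper makes the swap structure fully explicit---writing the random adjacency matrix in the form required by Theorem~3.2 and observing that two siblings differ by a single additional (biased) swap---whereas you defer these details to ``a routine adaptation'' of \cite{mss4}; this is a matter of exposition rather than of strategy.
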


The proof of this lemma depends on several results from \cite{mss4}:
\begin{lemma}[The second part of Lemma 2.2 from \cite{mss4}]
\label{avgrealrooted}
Two monic real-rooted polynomials of the same degree, $p_1$ and $p_2$, have a common interlacing if and only if all convex combinations of $p_1$ and $p_2$ are real-rooted.
\end{lemma}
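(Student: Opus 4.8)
The plan is to verify directly that the Matching Interlacing Family satisfies the three conditions in the definition of an interlacing family, reducing the two nontrivial conditions to the real-rootedness results of \cite{mss4}. The conditions on degree and monicity are routine: the characteristic polynomial of the adjacency matrix of any $d$-regular bipartite (multi)graph on $n$ vertices is monic of degree $n$ and is divisible by $(x+d)(x-d)$, since $\pm d$ are always eigenvalues (with eigenvectors that are constant on each side of the bipartition and of opposite signs on the two sides); so the quotient is monic of degree $n-2$, and taking an expectation over a distribution of such graphs with the deterministic factor $(x+d)(x-d)$ pulled out again yields a monic polynomial of degree $n-2$. For condition 2, observe that the distribution attached to a node is, by construction, the uniform average of the distributions attached to its children — either averaging over the partner of the next unmatched vertex, or over the first edge of a newly started matching. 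Since the characteristic polynomial is a linear functional of the distribution and division by $(x+d)(x-d)$ is linear, $p_k$ is the uniform, hence positive, linear combination of the polynomials of the children of $k$.

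The substance of the lemma lies in the real-rootedness half of condition 1 and in condition 3, and both follow from \cite{mss4}. For condition 1, we must show that for every node $k$ the expected characteristic polynomial of the adjacency matrix over the node's distribution — uniformly completing the partially specified matching, if present, and then adding $d-r$ uniformly random perfect matchings — is real-rooted; dividing by $(x+d)(x-d)$ then preserves real-rootedness because $d$ and $-d$ are among the roots of this expected polynomial. This is precisely a conditioned version of the statement established in \cite{mss4} that the expected characteristic polynomial of a sum of independent uniformly random perfect bipartite matchings, possibly conditioned on a partially specified matching, is real-rooted; that argument proceeds through the theory of real stable polynomials, realizing the characteristic polynomial as the image of an explicit multivariate real stable polynomial under specializations and expectation operators that preserve real stability. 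For condition 3, by \Cref{avgrealrooted} it suffices to show that every convex combination of the polynomials assigned to the children of a node $k$ is real-rooted. Such a convex combination with weights $\lambda_i$ equals, up to the fixed factor $(x+d)(x-d)$, the expected characteristic polynomial over the distribution in which the next unmatched vertex (or vertex $1$ of a new matching) is matched to its $i$th possible partner with probability $\lambda_i$ and the rest of the graph is completed uniformly at random — again a distribution of exactly the form handled by \cite{mss4} — so it is real-rooted.

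The main obstacle is not conceptual but one of correctly invoking the machinery of \cite{mss4}: our tree differs from the one used there (it is shallow, adding one edge per level, rather than exponentially deep), so we must check that every distribution we attach to a node, and every $\lambda$-weighted mixture of sibling polynomials, is one to which their real-rootedness theorems apply. This holds because all of these distributions fit the single uniform template ``fix some edges of the current matching, choose one additional edge with prescribed (possibly non-uniform) probabilities, then complete everything uniformly at random,'' which is exactly the family of conditional and mixture distributions that \cite{mss4} shows produce real-rooted expected characteristic polynomials. Once this correspondence is established, conditions 1--3 all follow and the lemma is proved.
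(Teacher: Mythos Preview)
Your proposal does not address the stated lemma at all. The statement labeled \texttt{avgrealrooted} is the classical characterization: two monic real-rooted polynomials of the same degree have a common interlacing if and only if every convex combination of them is real-rooted. This is quoted from \cite{mss4} and is not proved in the present paper. A proof of it would have to argue about the root locations of $p_1$, $p_2$, and $\lambda p_1 + (1-\lambda) p_2$ as $\lambda$ varies --- e.g., via sign-change counting or a continuity argument on the roots --- and would make no reference whatsoever to matchings, bipartite graphs, or the Matching Interlacing Family.

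What you have written is instead a sketch of a proof of \Cref{matchinginterlacing} (``The Matching Interlacing Family is an interlacing family''). You even \emph{invoke} \Cref{avgrealrooted} inside your argument (``by \Cref{avgrealrooted} it suffices to show that every convex combination \ldots''), so this cannot possibly be a proof of \Cref{avgrealrooted} itself. Before any comparison with the paper's approach is meaningful, you need to identify the correct target statement; as it stands there is nothing here to compare.
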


\begin{theorem}[Theorem 3.3 from \cite{mss4}]
\label{swapsrealrooted}
Let $A_i$ be arbitrary deterministic symmetric $d \times d$ matrices, and let $S_{ij}$ be independent random swap matrices (that is, $S_{ij}$ swaps two indices $s_{ij}$ and $t_{ij}$ with some probability $\alpha_{ij}$, and is the identity with probability $1-\alpha_{ij}$).  Then the expected characteristic polynomial of the random matrix
\begin{equation*}
\sum_i \left ( \prod_{j=1}^{N_i} S_{ij} \right ) A_i \left ( \prod_{j=1}^{N_i} S_{ij} \right )^T
\end{equation*}
is real-rooted.
\end{theorem}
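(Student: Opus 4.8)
The plan is to deduce real-rootedness from the stronger property of \emph{real stability}, using the operator-theoretic toolkit that underlies the interlacing-polynomial method. Recall that a polynomial $p\in\mathbb{R}[z_1,\dots,z_m]$ is real stable if it has no zero at which every coordinate lies in the open upper half-plane, and that a univariate real polynomial is real stable exactly when it is real-rooted. So it suffices to exhibit the expected characteristic polynomial as the image of an explicitly real stable polynomial in a larger set of variables under a composition of operations that preserve real stability, and then specialize the extra variables. The natural skeleton is an induction on the total number of swaps $\sum_i N_i$: with no swaps the matrix $\sum_i\bigl(\prod_j S_{ij}\bigr)A_i\bigl(\prod_j S_{ij}\bigr)^T$ is a fixed symmetric matrix, hence has real spectrum and real-rooted characteristic polynomial.

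For the inductive step I would dissect conjugation by a single random swap. If $S$ swaps indices $s$ and $t$, set $v=e_s-e_t$, so $v^Tv=2$ and $S=I-vv^T$ is the Householder reflection through $v^{\perp}$; a short computation gives, for any symmetric $B$,
\begin{equation*}
SBS \;=\; B - vw^T - wv^T,\qquad w \;=\; Bv-\tfrac12(v^TBv)\,v,
\end{equation*}
and one checks $v^Tw=0$, so $vw^T+wv^T$ is a \emph{balanced} rank-two matrix that can be written as $\tfrac12\bigl((v+w)(v+w)^T-(v-w)(v-w)^T\bigr)$, a difference of two rank-one positive semidefinite matrices. Conditioned on all the other (independent) swaps, the random matrix has the form $C+SBS$ with $C,B$ fixed and symmetric, so with swap probability $\alpha$,
\begin{equation*}
\mathbb{E}_S\bigl[\det(xI-C-SBS)\bigr] \;=\; (1-\alpha)\det(xI-C-B) \;+\; \alpha\det\!\bigl(xI-C-B+vw^T+wv^T\bigr) .
\end{equation*}
This expresses the one-swap expectation as a two-point average of two honestly real-rooted polynomials whose spectra differ by a balanced rank-two perturbation, and this is the elementary building block that must be chained together.

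The hard part is that the building block cannot simply be iterated on ordinary univariate characteristic polynomials: an average of real-rooted polynomials need not be real-rooted, so the induction has to carry a genuinely multivariate real-stable invariant, not merely a real-rooted polynomial in $x$. Moreover the rank-one directions $(v\pm w)(v\pm w)^T$ thrown off by one swap depend, through $w=Bv-\tfrac12(v^TBv)v$, on the realizations of all the other swaps, so there is no single fixed determinant of a positive linear pencil that records every swap at once. The real content, which I expect to be the main obstacle, is therefore to construct — in the spirit of the multivariate barrier and stability-preserver arguments developed for Kadison--Singer — a polynomial in $x$ together with one auxiliary variable per swap that is provably real stable and whose image under the successive swap-averaging operators is exactly the desired expected characteristic polynomial, and to verify that each of these operators (a low-order differential operator whose shape is dictated by the balanced rank-two identity above) lies in the Borcea--Brändén class of real stability preservers; the orthogonality $v\perp w$ and the fact that $w$ is determined by $B$ are precisely what make this verification go through. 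Granting this, real stability is maintained along the induction, and specializing all the auxiliary variables at the end shows that $\mathbb{E}\bigl[\det\bigl(xI-\sum_i(\prod_j S_{ij})A_i(\prod_j S_{ij})^T\bigr)\bigr]$ is real stable, hence real-rooted.
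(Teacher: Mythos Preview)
This theorem is not proved in the paper: it is quoted verbatim as Theorem~3.3 of \cite{mss4} and used as a black box in the proof of \Cref{matchinginterlacing}. So there is no ``paper's own proof'' to compare against here; any comparison would have to be with \cite{mss4} itself.

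As for the proposal on its own terms: what you have written is a plan, not a proof. You correctly identify the base case, you correctly compute the effect of a single swap as the balanced rank-two perturbation $SBS=B-vw^T-wv^T$ with $v\perp w$, and you correctly diagnose why a naive univariate induction fails. But the paragraph beginning ``The real content, which I expect to be the main obstacle\ldots'' and ending ``Granting this\ldots'' is precisely the theorem. You have not constructed the multivariate real-stable lift, you have not written down the differential operator that implements a single swap-average, and you have not verified membership in the Borcea--Br\"and\'en class. In particular, your own observation that $w=Bv-\tfrac12(v^TBv)v$ depends on the realizations of all inner swaps is exactly what obstructs the most natural attempt (one affine variable per swap, then specialize): the rank-one directions $(v\pm w)$ are not fixed vectors, so there is no obvious determinant of a fixed linear pencil that is simultaneously multilinear in all the swap variables. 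Until you exhibit the lift and check the operator, the argument has a genuine gap at the only nontrivial step.

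If you want to close it, the approach in \cite{mss4} is indeed stability-based, but the actual mechanism is more delicate than ``one variable per swap plus Borcea--Br\"and\'en''; you should consult their Section~3 for the precise construction rather than try to reinvent it from the rank-two identity alone.
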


\begin{lemma}[Lemma 3.5 from \cite{mss4}]
\label{permutationbyswaps}
A uniformly random permutation matrix $P$ can be expressed as $\prod_{j=1}^N S_j$, where $S_j$ are independent random swap matrices (in the same sense as in \Cref{swapsrealrooted}).
\end{lemma}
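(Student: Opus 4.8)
The plan is to verify the three defining conditions of an interlacing family directly for the Matching Interlacing Family, using \Cref{swapsrealrooted}, \Cref{permutationbyswaps} and \Cref{avgrealrooted} as the imported ingredients.

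I would first dispose of the easy conditions. Every leaf is a $d$-regular bipartite (multi)graph on $n$ vertices, so its adjacency matrix is $n\times n$ and its characteristic polynomial is monic of degree $n$; moreover, since the bipartition is into two equal halves, $d$ is an eigenvalue (all-ones eigenvector) and $-d$ is an eigenvalue (the $\pm 1$ bipartition vector), so the characteristic polynomial is divisible by $(x+d)(x-d)$. The polynomial at an internal node is a convex combination of leaf characteristic polynomials (over the node's distribution), hence also monic of degree $n$ and divisible by $(x+d)(x-d)$; thus every $p_k$ is monic of degree $n-2$. Condition 2 is then immediate from linearity of expectation: the excerpt already records that a node's distribution is the uniform average of its children's distributions, so each node's expected characteristic polynomial is that same average over the children, and dividing all of them by $(x+d)(x-d)$ preserves the relation, making $p_k$ a positive (uniform) linear combination of the polynomials at the children of $k$.

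For the remaining part of condition 1, real-rootedness, I would show that the adjacency matrix of a graph drawn from the distribution at any node $k$ can be written in the form $\sum_i \left(\prod_j S_{ij}\right) A_i \left(\prod_j S_{ij}\right)^T$ with deterministic symmetric $A_i$ and independent random swaps $S_{ij}$, so that \Cref{swapsrealrooted} applies. Indeed, each already-fixed complete matching contributes a deterministic summand (no swaps); a uniformly random complete bipartite matching equals $Q A_0 Q^T$ for a fixed perfect-matching matrix $A_0$ and a uniformly random permutation $Q$ of one side of the bipartition, which is a product of independent random swaps by \Cref{permutationbyswaps}; and, if the node has a partially specified matching, its uniform completion equals $Q_\pi (D+C_0) Q_\pi^T$, where $D$ is the fixed partial part, $C_0$ a fixed canonical completion, and $Q_\pi$ a uniformly random permutation of the still-unmatched vertices on one side (again a product of independent random swaps, and fixing $D$). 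Hence the expected characteristic polynomial at $k$ is real-rooted of degree $n$; since $(x+d)(x-d)$ divides it and has only the real roots $\pm d$, the quotient $p_k$ is real-rooted of degree $n-2$.

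Condition 3 is where the real work is. Fix two siblings $i,i'$, children of a node $k$; by \Cref{avgrealrooted} it is enough to show every convex combination $\lambda p_i+(1-\lambda)p_{i'}$ is real-rooted, equivalently (after the same division argument) that the mixture ``sample from $i$ with probability $\lambda$, from $i'$ with probability $1-\lambda$'' has a real-rooted expected characteristic polynomial. The two children differ only in where a single new vertex $v$ is sent by the current partially specified matching: child $i$ sends $v\to a$ and child $i'$ sends $v\to b$, and each then completes uniformly and adds the remaining uniform complete matchings. I would realize the mixture by inserting one extra independent random swap: fix a full matching $F$ extending $k$'s partial matching with $v\to a$; first conjugate $F$ by a uniformly random permutation $Q_\pi$ of the right-vertices matched to $\{t+2,\dots,n/2\}$ in $F$ (a product of independent random swaps), then conjugate by the random swap $S^*$ that exchanges $a$ and $b$ with probability $1-\lambda$ and is the identity otherwise. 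When $S^*$ is the identity this reproduces exactly the distribution of child $i$; when $S^*$ swaps $a$ and $b$, one checks that $v\to b$ and that the induced bijection of $\{t+2,\dots,n/2\}$ onto the remaining right-vertices is still uniform (post-composition of a uniform bijection with a fixed bijection between the two target sets), i.e.\ exactly the distribution of child $i'$. Since $Q_\pi$, $S^*$, and the swaps realizing the remaining random matchings are all independent, the mixture's adjacency matrix again has the form required by \Cref{swapsrealrooted}, so its expected characteristic polynomial is real-rooted; dividing by $(x+d)(x-d)$ gives real-rootedness of $\lambda p_i+(1-\lambda)p_{i'}$, and \Cref{avgrealrooted} then yields the common interlacing. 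The main obstacle is exactly this construction: expressing the two-child mixture as a single swap-conjugation model fitting \Cref{swapsrealrooted}, together with the verification that the uncommitted part of the matching remains uniformly distributed after the interpolating swap. The case in which $k$'s last matching is already complete is the special case $t=0$ (the new vertex being vertex $1$ of a fresh matching) and is handled identically.
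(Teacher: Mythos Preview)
Your proposal does not address the stated lemma. \Cref{permutationbyswaps} is the assertion that a uniformly random permutation matrix factors as a product of independent random swap matrices; it is quoted from \cite{mss4} and is not proved in this paper. Your write-up treats \Cref{permutationbyswaps} as a black box and instead argues for \Cref{matchinginterlacing}, that the Matching Interlacing Family is an interlacing family. If the target really is \Cref{permutationbyswaps}, a completely different argument is needed---for instance, realize a uniform permutation by the Fisher--Yates shuffle, where step $j$ swaps position $j$ with a uniformly chosen position in $\{j,\dots,n\}$, and then decompose each such uniform-target transposition as a chain of independent Bernoulli swaps.

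Read as a proof of \Cref{matchinginterlacing}, your argument is correct and follows essentially the paper's route: condition~2 via linearity of expectation; real-rootedness by expressing the node's random adjacency matrix as a sum of fixed matchings conjugated by products of independent random swaps (using \Cref{permutationbyswaps}) and invoking \Cref{swapsrealrooted}; and the sibling common interlacing by inserting one extra Bernoulli swap exchanging the two candidate partners of the newly matched vertex, so that the convex combination is again in the form required by \Cref{swapsrealrooted}, after which \Cref{avgrealrooted} finishes. Your verification that the uncommitted portion of the matching stays uniform after the interpolating swap is spelled out more carefully than in the paper, but the underlying idea is the same.
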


\begin{proof}[Proof of \Cref{matchinginterlacing}]
Since the distribution associated with a non-leaf node is the average of the distribution of its children, it immediately follows that the polynomial associated with a non-leaf node is the average (which is, in particular, a positive linear combination) of the polynomials associated with its children.  All of the characteristic polynomials are of $n \times n$ matrices, so they all are all monic polynomials of the same degree.  It remains to show that the polynomials are real-rooted and that any two siblings: that is, the same sequences of matchings followed by partially specified matchings differing only in the assignment of the last matched vertex, are assigned polynomials with a common interlacing.

To do this, we first note that interlacing and real-rootedness is not affected by dividing out the terms from the trivial eigenvalues, since that always removes the same real roots from all polynomials.  Therefore, we just need to show that the expected characteristic polynomials themselves are real-rooted and interlacing for siblings.

We note that we can write the random adjacency matrix for a node $k$ as
\begin{equation*}
\left ( \sum_{i=1}^{r-1} P^k_i M (P^k_i)^T \right ) + P^k_r M (P^k_r)^T + \sum_{i=r+2}^d P_i M P_i^T
\end{equation*}
where $M$ is an arbitrary bipartite matching, $P^k_i$ for $i < r$ are deterministic permutation matrices that permute it into the already-selected complete matchings, $P^k_r$ is a partially random permutation providing the matched edges and the remaining random edges, and $P_i$ for $i > r$ are uniformly random permutations.  We may apply \Cref{permutationbyswaps} to express $P^k_r$ as a product of independent random swaps times a deterministic matrix and $P_i$ for $i > r$ as a product of independent random swaps.  This puts it in the form needed to apply \Cref{swapsrealrooted}, implying that this expected characteristic polynomial is real-rooted.

To get the interlacing, we note that for any sibling node $k'$, we may express its random adjacency matrix as
\begin{equation*}
\left ( \sum_{i=1}^{r-1} P^k_i M (P^k_i)^T \right ) + S P^k_r M (P^k_r)^T S^T + \sum_{i=r+2}^d P_i M P_i^T
\end{equation*}
where $S$ is a deterministic swap matrix that simply swaps the two different partners of the last matched node in the siblings.  Any convex combination with the sibling then can be expressed as an expected characteristic polynomial of the same form, except with $S$ now a \emph{random} swap matrix with some probability.  We can then apply \Cref{permutationbyswaps} and \Cref{swapsrealrooted} in the same manner as before, showing that any convex combination of the expected characteristic polynomials for two siblings is real-rooted.  Finally, by \Cref{avgrealrooted}, this means the two siblings have a common interlacing.
\end{proof}

Furthermore, the root node of the Matching Interlacing Family is the expected characteristic polynomial of the combination of $d$ uniformly random complete bipartite matchings on $n$ vertices, divided by the trivial factors.  It is proved in \cite{mss4} that this has max root bounded by $2 \sqrt{d-1}$.  Therefore, \Cref{interlacingalgo}, applied to this interlacing family, would produce a bipartite Ramanujan graph.  However, this requires a computational representation.  Using an explicit representation of the matchings to index the nodes, the ability to efficiently list the children is immediate.  What remains is to give a polynomial-time algorithm to compute the needed expected characteristic polynomial (the naive algorithm, simply summing over all possible assignments, is far from being polynomial time).

\section{Simplifying the Problem}
Fortunately, we can substantially simplify the polynomial computation problem here.

First, consider an arbitrary $c$-regular bipartite graph $G$, and let $G'$ be $G$ plus a uniformly random bipartite matching.  Now, we note that Theorem 4.10 of \cite{mss4} (proved via \cite{finiteconv}) gives a formula for the expected characteristic polynomial of the adjacency matrix of of $G'$ in terms of the characteristic polynomial of $G$ itself.  Furthermore, for any fixed degree, this formula is \emph{linear}--it expresses the coefficients of the expected characteristic polynomial of $G'$ as linear combinations of the coefficients of the characteristic polynomial of $G$--and directly allows computation in polynomial time.  By linearity of expectation, even if $G$ itself is a random graph, as long as it has known constant degree the expected characteristic polynomial of $G'$ can be computed from the expected characteristic polynomial of $G$.

This means that to compute the expected characteristic polynomial of the distribution of graphs associated with a node of the Matching Interlacing Family, it suffices to be able to do it in the case where the partially specified matching is the last one.  We can always simply take the expected characteristic polynomial for the graph consisting of the first $r$ matchings only, then linearly transform this to account for the $d-r$ remaining uniformly random matchings.

This amounts to taking the expected characteristic polynomial of a bipartite graph (the complete matchings plus the edges already selected in the partially specified matching) plus a random bipartite matching precisely covering some given \emph{subset} of the vertices (which will be the unmatched vertices from the partially specified matching).  This is the expected characteristic polynomial of a random matrix of the form

\begin{equation*}
\begin{pmatrix}
0 & (A + P_B) \\
(A + P_B)^T & 0
\end{pmatrix}
\end{equation*}

where $A$ is a fixed matrix and $P_B$ is a random permutation on some sub-block $B$ (with rows and columns corresponding to unmatched vertices on both sides), and zero outside the block.

Next, we note that one can perform the quadrature argument from \cite{mss4}.  We perform orthogonal changes of basis to the rows and columns of $A + P_B$ (which always preserves the characteristic polynomial of interest) that keep the block structure of $B$, but transform the all-ones vector within the block to a basis vector.  This isolates the non-mean-zero part of the permutation matrix, as in \cite{mss4}'s quadrature.  We may fold that part into $A$, asking for the expected characteristic polynomial of
\begin{equation*}
\begin{pmatrix}
0 & (\hat A + \hat P_{\hat B}) \\
(\hat A + \hat P_{\hat B})^T & 0
\end{pmatrix}
\end{equation*}
where $\hat A$ is the change of variables applied to $A + \mathrm{E}[P_B]$, and $\hat P_{\hat B}$ is the change of variables applied to $P_B-\mathrm{E}[P_B]$ (noting that $\mathrm{E}[P_B]$ is precisely the part of $P_B$ aligned with the all-ones vector on the block).  We define $\hat B$ as the containing the directions from $B$ orthogonal to the all-ones vector.  $\hat B$ is $(l-1)$-dimensional if $B$ was $l$-dimensional.

Note that $\hat P_{\hat B}$ are not actually permutation matrices in that block, but the result of a change of variables of a permutation matrix (minus its all-ones component) in the original basis.  More specifically, consider the unit basis vectors $e_i$ in the original block $B$.  If we project off their component in the direction of the all-ones vector, then apply the change of variables, we get a regular simplex of $l$ vectors $\hat e_i$, with all pairs at distance exactly $\sqrt{2}$ centered at the origin, in $\mathbb{R}^{l-1}$.  $\hat P_{\hat B}$ is then a linear transformation that randomly permutes the $\hat e_i$ (while acting as the identity outside the block $\hat B$).

Now, we have the main quadrature result:
\begin{restatable}{theorem}{quadrature}
\label{quadrature}
Let $M$ be an arbitrary $n \times n$ symmetric matrix, and let $\hat P_{\hat B}$ be a random matrix that randomly permutes the vertices of an $l-1$-dimensional regular simplex, centered at 0, within an $l-1$-dimensional block $\hat B$, while acting as the identity outside $\hat B$.  Then the expected characteristic polynomial of
\begin{equation*}
M + \begin{pmatrix}
0 & \hat P_{\hat B} \\
\hat P_{\hat B}^T & 0
\end{pmatrix}
\end{equation*}
is equal to the expected characteristic polynomial of
\begin{equation*}
M + \begin{pmatrix}
0 & Q_{\hat B} \\
Q_{\hat B}^T & 0
\end{pmatrix}
\end{equation*}
where $Q_{\hat B}$ is a Haar-random (i.e. uniformly random) orthogonal matrix on the block $\hat B$.
\end{restatable}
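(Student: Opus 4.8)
The plan is to strip away everything deterministic, land in the orthogonal group on the single block $\hat B$, and then argue that there the random simplex symmetry and the Haar measure are indistinguishable by the expected characteristic polynomial. First, note that $\hat P_{\hat B}$ restricted to $\hat B$ is genuinely orthogonal: the relations $\sum_i \hat e_i \hat e_i^T = I$ and $\sum_i \hat e_i = 0$ say the $\hat e_i$ form a balanced tight frame, whose canonical dual is itself, so $\hat P_\pi = \sum_i \hat e_{\pi(i)} \hat e_i^T$, and permuting a tight frame always extends to an orthogonal transformation. Hence the simplex-permutation is a uniformly random element of a finite subgroup $\Gamma \cong S_l$ of $O(l-1)$. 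The ``identity outside $\hat B$'' parts of both $\hat P_{\hat B}$ and $Q_{\hat B}$ are deterministic and fold into $M$, so \Cref{quadrature} reduces to: for a fixed symmetric $M'$, the expected characteristic polynomial of $M' + \bigl(\begin{smallmatrix} 0 & R \\ R^T & 0 \end{smallmatrix}\bigr)$ (with $R$ padded by zeros off the block) is the same whether $R$ is uniform on $\Gamma$ or Haar on $O(l-1)$. Equivalently, writing $\bigl(\begin{smallmatrix} 0 & R \\ R^T & 0 \end{smallmatrix}\bigr) = W J W^T$ with $J = \bigl(\begin{smallmatrix} 0 & I \\ I & 0 \end{smallmatrix}\bigr)$ fixed and $W = R \oplus I$, this asks whether $\Gamma$ is a good enough ``quadrature'' for the map $W \mapsto \chi(M + W J W^T)$.

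Second, I would expand the characteristic polynomial. The entries of $\bigl(\begin{smallmatrix} 0 & R \\ R^T & 0 \end{smallmatrix}\bigr)$ that depend on $R$ occupy only a $2(l-1)$-dimensional ``cross'' of rows and columns, and within it only the two $(l-1)\times(l-1)$ off-diagonal sub-blocks; a cofactor/Schur-complement expansion then writes every coefficient of $\chi$ as a polynomial in the entries of $R$ of degree bounded in terms of $l$ alone --- in fact a sum of products of minors of $R$ whose index sets can be matched between the two off-diagonal blocks. Exploiting $R^TR = I$ to simplify the top-degree pieces restricts the monomials further. So it suffices to show that the uniform measure on $\Gamma$ and the Haar measure on $O(l-1)$ assign equal expectations to every polynomial of this restricted family --- equivalently, to match the relevant mixed moments $\mathbb{E}[R_{i_1 j_1} \cdots R_{i_k j_k}]$.

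Third, to match the moments, compute both sides. On the Haar side, orthogonal invariance (Weingarten calculus) expresses each mixed moment as a sum over pairings of the indices with explicit weights, so that every odd-order moment, and more generally every moment whose index structure is not fully pairable, vanishes. On the $\Gamma$-side, substituting $\hat P_\pi = \sum_i \hat e_{\pi(i)} \hat e_i^T$ and averaging over $\pi$ turns each moment into a sum over the orbit types of $S_l$ acting on tuples of frame indices, weighted by the inner products $\langle \hat e_i, \hat e_j\rangle$. The goal is to see that, term by term, these agree with the Haar pairings \emph{on exactly the family of polynomials produced in the previous step}: the low-order frame moments $\sum_i \hat e_i^{\otimes r}$ coincide with the symmetric tensors built from $I$ (this is the fact that the regular simplex is a $2$-design), while the simplex's genuinely non-Gaussian moments --- most conspicuously the nonzero third moment $\sum_i \hat e_i^{\otimes 3}$ --- never get contracted into any coefficient of $\chi$, precisely because the block structure forces the surviving monomials to come from matchable pairs of minors. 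Once all needed moments agree, the two expected characteristic polynomials coincide coefficient by coefficient.

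The delicate point --- and the real content of the theorem --- is this last reconciliation. A crude count lets $\chi$ depend on moments of $R$ of order up to roughly $2l$, which is well past the design strength of $\Gamma$ (already $S_3 \hookrightarrow O(2)$ fails to be a $3$-design), so the proof must show that the particular polynomial functionals coming from $M + \bigl(\begin{smallmatrix} 0 & R \\ R^T & 0 \end{smallmatrix}\bigr)$ never see the ``extra,'' simplex-specific invariants and in effect see only the second-order data of $R$, which $\Gamma$ and $O(l-1)$ share. Carrying this out --- either through the minor-matching bookkeeping above, or by pinning down the common value as an explicit polynomial in the entries of $M$ (e.g.\ via the mixed-characteristic-polynomial / finite-free-convolution formulas used in \cite{mss4}) and checking both distributions realize it --- is where the work lies; I would expect the second, closed-form route to be cleaner than pushing the combinatorics through by hand.
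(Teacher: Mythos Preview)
Your route is genuinely different from the paper's, and the step you yourself flag as ``the delicate point'' is a real gap that the outline does not close.

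The paper never attempts moment matching. It bootstraps from a single low-dimensional fact (Corollary~4.6 of \cite{mss4}): for the three-element simplex, the random triangle permutation $\hat P_{ijk}$ and the Haar $2\times 2$ rotation $Q_{ijk}$ give the same expected characteristic polynomial of $A + R B R^T$ for any fixed symmetric $A,B$. From this one shows that the expected characteristic polynomial of $M+X$, with $X$ the block-off-diagonal simplex-permutation matrix, is unchanged when $M$ is conjugated by any fixed planar rotation $Q_{ijk}^{(0)}$: insert an extra $\hat P_{ijk}$ on $X$ (which leaves its law unchanged), trade it for $Q_{ijk}$ via the triangle lemma, absorb the fixed rotation into the now Haar-random one, and trade back. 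Since such planar rotations generate all of $O(\hat B)$ (Lemma~4.7 of \cite{mss4}), the expected characteristic polynomial is invariant under conjugating $M$ by \emph{any} orthogonal matrix on $\hat B$, hence equals its own Haar average, which is the right-hand side. No moment of order above two is ever computed or compared.

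In your approach, the assertion that ``the block structure forces the surviving monomials to come from matchable pairs of minors,'' so that the functional ``in effect sees only the second-order data of $R$,'' carries the entire weight and is not justified. For an \emph{arbitrary} symmetric $M$ (as in the theorem statement), expanding $\det\bigl(xI - M - \bigl(\begin{smallmatrix}0&R\\R^T&0\end{smallmatrix}\bigr)\bigr)$ produces monomials $R_{i_1 j_1}\cdots R_{i_{2m} j_{2m}}$ in which the bipartite structure forces only that the total degree be even, not that the row or column multisets pair off; a permutation can cross from the upper to the lower block through $R$, wander inside $M$, and return through $R^T$ at unrelated indices. Your own example already bites: for $l=3$ one checks $\mathbb{E}_{S_3}[R_{11}^3]=\tfrac14\neq 0=\mathbb{E}_{\mathrm{Haar}}[R_{11}^3]$, so you must really exhibit a cancellation mechanism specific to the characteristic-polynomial functional, and neither the ``minor-matching bookkeeping'' nor invoking $R^TR=I$ is spelled out to the point of doing so. The fallback of ``pinning down the common value via the finite-free-convolution formulas of \cite{mss4}'' is circular here: those closed forms are derived for the full-block case, and the present paper's sub-block computation in Section~5 uses \Cref{quadrature} as an input, not as a consequence. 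The paper's group-generation argument sidesteps all of this by never needing to know which moments appear.
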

This implies that we can instead ask for the expected characteristic polynomial of
\begin{equation*}
\begin{pmatrix}
0 & (\hat A + Q_{\hat B}) \\
(\hat A + Q_{\hat B})^T & 0
\end{pmatrix}
\end{equation*}

Finally, we note that this is a symmetric block-off-diagonal matrix, allowing us to apply the following lemma:
\begin{lemma}
Given a square matrix $M$, let $p$ be the characteristic polynomial of the block-off-diagonal matrix
\begin{equation*}
N = \begin{pmatrix}
0 & M \\
M^T & 0
\end{pmatrix}
\end{equation*}
and $p'$ be the characteristic polynomial of $M^T M$.  Then $p(x) = p'(x^2)$.
\end{lemma}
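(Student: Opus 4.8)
The plan is to compute $p(x) = \det(xI_{2n} - N)$ directly from the block structure, using the Schur complement determinant identity, and to read off $p'(x^2)$. Writing $xI_{2n} - N = \begin{pmatrix} xI_n & -M \\ -M^T & xI_n \end{pmatrix}$, the whole argument is essentially one application of a standard formula; the only point requiring care is that forming the Schur complement of the top-left block $xI_n$ presupposes that block is invertible, which fails at $x = 0$. I would handle this cleanly by treating $x$ as a formal indeterminate and working over the field of rational functions $F(x)$ (with $F$ the field containing the entries of $M$), in which $xI_n$ is invertible; the identity obtained there is between elements of $F[x] \subset F(x)$, hence is a genuine polynomial identity valid for all $x$, including $0$. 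A side benefit of this framing is that it makes no assumption that $M$ is real or that $M^TM$ is diagonalizable.

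Carrying out the computation: over $F(x)$, the Schur complement formula gives
\[
\det\begin{pmatrix} xI_n & -M \\ -M^T & xI_n \end{pmatrix} = \det(xI_n)\,\det\bigl(xI_n - (-M^T)(xI_n)^{-1}(-M)\bigr) = x^n\,\det\bigl(xI_n - x^{-1}M^TM\bigr).
\]
Since $xI_n - x^{-1}M^TM = x^{-1}(x^2 I_n - M^TM)$, pulling the scalar out of the $n \times n$ determinant produces a factor $x^{-n}$, cancelling the leading $x^n$ and leaving exactly $\det(x^2 I_n - M^TM) = p'(x^2)$. As the left-hand side equals $p(x)$ and both sides lie in $F[x]$, the identity $p(x) = p'(x^2)$ follows. (Degrees are consistent: $N$ is $2n \times 2n$ so $\deg p = 2n$, and $\deg p' = n$ so $\deg p'(x^2) = 2n$.)

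If one prefers to avoid the formal-variable language, the same Schur complement step proves the identity for every $x \neq 0$ over $\mathbb{C}$, and then the fact that two polynomials agreeing at infinitely many points coincide finishes it. A more conceptual, if less airtight, picture comes from noting that $N^2 = \begin{pmatrix} MM^T & 0 \\ 0 & M^TM \end{pmatrix}$, that $MM^T$ and $M^TM$ share a characteristic polynomial, and that conjugating $N$ by $\mathrm{diag}(I_n, -I_n)$ negates it so the spectrum of $N$ is symmetric about $0$; in the real symmetric case this already pins the root multiset of $p$ down to $\{\pm\sqrt{\mu}\}$ over the roots $\mu$ of $p'$. I do not anticipate any genuine obstacle: this is a routine determinant manipulation, and the only thing deserving an explicit sentence is the $x = 0$ degeneracy, which the formal-indeterminate viewpoint disposes of.
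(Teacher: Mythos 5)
Your proof is correct, but it takes a genuinely different route from the paper's. The paper argues spectrally: each eigenvalue $y$ of $M^TM$ gives rise to the pair $\pm\sqrt{y}$ in the spectrum of $N$, so each factor $(x-y)$ of $p'$ becomes $(x^2-y)$ in $p$ --- essentially the ``conceptual picture'' you relegate to your closing remarks. You instead compute $\det(xI_{2n}-N)$ directly via the Schur complement over the field of rational functions $F(x)$, where $xI_n$ is invertible, obtaining $x^n\det\bigl(xI_n - x^{-1}M^TM\bigr) = \det(x^2I_n - M^TM)$. What your approach buys is rigor and generality: it is a polynomial identity valid over any field, makes no appeal to $M$ being real or to $M^TM$ being diagonalizable, and handles eigenvalue multiplicities automatically --- points the paper's one-line eigenvalue correspondence glosses over (it is fine there because $N$ is real symmetric and $M^TM$ is positive semidefinite, so the multiplicity bookkeeping works out, but the paper does not say so). What the paper's version buys is brevity and the transparent ``why'': the symmetry of the spectrum of $N$ about $0$ and the relation $N^2 = \mathrm{diag}(MM^T, M^TM)$, which is the structural fact the surrounding construction actually exploits. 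Your handling of the $x=0$ degeneracy by working with a formal indeterminate is exactly the right fix for the only pitfall in the determinant route.
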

\begin{proof}
This follows from the fact that the eigenvalues of $N$ can be placed in correspondence to those of $M^T M$: an eigenvalue of $y$ in $M^T M$ corresponds to two eigenvalues $\pm \sqrt{y}$ in $N$.  That in turn means that each factor $(x-y)$ of $p'$ becomes a factor $(x+\sqrt{y})(x-\sqrt{y}) = (x^2-y)$ in $p$.
\end{proof}
That means that our characteristic polynomial could be obtained by taking the characteristic polynomial of $(\hat A + Q_{\hat B})^T (\hat A + Q_{\hat B})$ and replacing $x$ with $x^2$.  Since this is again a linear map on the coefficients, the expected characteristic polynomial of the block off-diagonal matrix can also be obtained by applying this transformation to the expected characteristic polynomial of $(\hat A + Q_{\hat B})^T (\hat A + Q_{\hat B})$.  Computing expected characteristic polynomials for matrices of that form is therefore sufficient.

\section{Computing the Expected Characteristic Polynomial}
The $(n/2-k)$th coefficient of the characteristic polynomial of an $(n/2) \times (n/2)$ matrix $Z^T Z$ is equal to $(-1)^{n/2-k}$ times the sum of squares of all $k \times k$ minors of $Z$.  For $Z = \hat A + Q_{\hat B}$, we can express this sum as
\begin{equation*}
\sum_{U,V \textrm{ s.t. } |U| = |V| = k} \det((\hat A + Q_{\hat B})_{U,V})^2
\end{equation*}
where $U$ and $V$ are subsets of rows and columns, respectively, and $M_{U,V}$ is the submatrix of $M$ containing the rows in $U$ and the columns in $V$.

By the multilinearity of the determinant, we may write
\begin{equation*}
\det((\hat A + Q_{\hat B})_{U,V}) = \sum_{U' \subseteq U, V' \subseteq V, |U'| = |V'|} \operatorname{sign}(U, V, U', V') \det(\hat A_{U',V'}) \det((Q_{\hat B})_{U \setminus U',V \setminus V'})
\end{equation*}
where $\operatorname{sign}(U,V,U',V')$ is $\pm 1$ depending on the sets.  We may then look at the expected value of the square of this determinant:
\begin{align*}
\hspace{2em}&\hspace{-2em} E_{Q_{\hat B}}[\det((\hat A + Q_{\hat B})_{U,V})^2] \\
= &\sum_{\substack {U' \subseteq U, V' \subseteq V, |U'| = |V'|, \\ U'' \subseteq U, V'' \subseteq V, |U''| = |V''|}} \left ( \begin{multlined}
\operatorname{sign}(U,V,U',V') \operatorname{sign}(U,V,U'',V'') \det(\hat A_{U',V'}) \det(\hat A_{U'',V''}) \\
E_{Q_{\hat B}}[\det((Q_{\hat B})_{U \setminus U',V \setminus V'}) \det((Q_{\hat B})_{U \setminus U'',V \setminus V''})]
\end{multlined} \right )
\end{align*}
Luckily, whenever $U' \neq U''$ or $V' \neq V''$, at least one row or column of $(Q_{\hat B})_{U \setminus U',V \setminus V'}$ or $(Q_{\hat B})_{U \setminus U'',V \setminus V''}$ is a symmetric random variable conditioned on the other one, and so 
\begin{equation*}
E_{Q_{\hat B}}[\det((Q_{\hat B})_{U \setminus U',V \setminus V'}) \det((Q_{\hat B})_{U \setminus U'',V \setminus V''})] = 0
\end{equation*}
We can then write the expected value as
\begin{equation*}
E_{Q_{\hat B}}[\det((\hat A + Q_{\hat B})_{U,V})^2] = \sum_{U' \subseteq U, V' \subseteq V, |U'| = |V'|} E_{Q_{\hat B}}[\det((Q_{\hat B})_{U \setminus U',V \setminus V'})^2] \det(\hat A_{U',V'})^2
\end{equation*}

We define
\begin{equation*}
f_{\hat B}(U,V,U',V') = E_{Q_{\hat B}}[\det((Q_{\hat B})_{U \setminus U',V \setminus V'})^2].
\end{equation*}

If $U \setminus U'$ or $V \setminus V'$ is not contained within the block $\hat B$, $(Q_{\hat B})_{U \setminus U',V \setminus V'}$ includes a zero row or column, so
\begin{equation*}
f_{\hat B}(U,V,U',V) = 0
\end{equation*}
Otherwise,
\begin{equation*}
f_{\hat B}(U,V,U',V') = \frac{1}{\dbinom{\hat l}{|U|-|U'|}}
\end{equation*}
by a symmetry argument (where $\hat l$ is the dimension of the block $\hat B$).

The expected value of the sum of the squares of all of the $k \times k$ minors is then
\begin{equation*}
\sum_{U,V \textrm{ s.t. } |U| = |V| = k} \left ( \sum_{U' \subseteq U, V' \subseteq V, |U'| = |V'|} f_{\hat B}(U, U', V, V') \det(\hat A_{U',V'})^2 \right )
\end{equation*}
Switching the order of summation, we can write it as
\begin{equation*}
\sum_{U',V'\textrm{ s.t. }|U'| = |V'|} \left ( \sum_{U,V \textrm{ s.t. } U' \subseteq U, V' \subseteq V, |U| = |V| = k} f_{\hat B}(U, U', V, V') \right ) \det(\hat A_{U',V'})^2
\end{equation*}

But now we can note that
\begin{equation*}
\sum_{U,V \textrm{ s.t. } U' \subseteq U, V' \subseteq V, |U| = |V| = k} f_{\hat B}(U, U', V, V') = \frac{{\dbinom{\hat l-O_{\hat B}^r(U')}{k-|U'|}} {\dbinom{\hat l-O_{\hat B}^c(V')}{k-|U'|}}}{\dbinom{\hat l}{k-|U'|}}
\end{equation*}
where $\hat l$ is the dimension of the block $\hat B$, $O_{\hat B}^r(U')$ gives the number of rows in $U'$ contained in $\hat B$ and $O_{\hat B}^c(V')$ is analogous with columns.  This follows simply follows from counting the number of choices of $U$ and $V$ for which $f_{\hat B}(U, U', V, V')$ is nonzero (precisely those $U$ and $V$ that make $U \setminus U'$ and $V \setminus V'$ fully contained in the block $\hat B$), combined with the value of $f$ when it is nonzero.

We define
\begin{equation*}
g(\hat l, k, k', p, q) = \frac{{\dbinom{\hat l-p}{k-k'}} {\dbinom{\hat l-q}{k-k'}}}{\dbinom{\hat l}{k-k'}}
\end{equation*}

We may then write the expected sum of the squares of the $k \times k$ minors as
\begin{equation*}
\sum_{k',p,q} g(\hat l, k, k', p, q) \left ( \sum_{\substack{U',V' \textrm{ s.t. } |U'| = |V'| = k', \\ O_{\hat B}^r(U') = p, O_{\hat B}^c(V') = q}} \det(\hat A_{U',V'})^2 \right )
\end{equation*}

That is, we can compute the coefficients of our expected characteristic polynomial as linear functions in the values
\begin{equation*}
C_{k',p,q} = \left ( \sum_{\substack{U',V' \textrm{ s.t. } |U'| = |V'| = k', \\ O_{\hat B}^r(U') = p, O_{\hat B}^c(V') = q}} \det(\hat A_{U',V'})^2 \right )
\end{equation*}

It remains to be able to compute these values.  Note that without the constraints of $O_{\hat B}^r$ and $O_{\hat B}^c$, this would just be a sum of squared minors of a particular size, which correspond to the coefficients of the characteristic polynomial of $\hat A^T \hat A$.  When $\hat B$ covers the entire matrix, they must simply be equal to $k'$, so the expected characteristic polynomial can therefore be computed using only the characteristic polynomial of $\hat A^T \hat A$ itself. That makes sense as that is a special case of a finite free convolution from \cite{finiteconv}.  This does not work when $\hat B$ does not cover the whole matrix.

However, we can introduce a \emph{trivariate} determinant polynomial, extending the standard characteristic polynomial
\begin{equation*}
P(\lambda, t_r, t_c) = \det(\hat A^T ((I-D_r^{\hat B})+t_r D_r^{\hat B}) \hat A ((I-D_c^{\hat B})+t_c D_c^{\hat B}) + \lambda I)
\end{equation*}
where $D_r^{\hat B}$ is the diagonal matrix with ones in the rows in $\hat B$ and zeroes in the others, and $D_c^{\hat B}$ the same for the columns in $\hat B$.  The coefficient of this trivariate polynomial on $\lambda^{(n/2-k')} t_r^p t_c^q$ now provides precisely the value $C_{k',p,q}$ that we are looking for.

One way to see this is to note that that for fixed $t_r$, $t_c$, it is $\det(\bar A^T \bar A + \lambda I)$, where $\bar A$ is $\hat A$ with the rows in $\hat B$ multiplied by $\sqrt{t_r}$ and the columns in $B$ multiplied by $\sqrt{t_c}$.  The $\lambda^{(n/2-k')}$ coefficient of that polynomial is the sum of the squares of the $k' \times k'$ minors of $\bar A$.  The extra multiplications by $\sqrt{t_r}$ and $\sqrt{t_c}$ multiply the contribution of a minor with $p$ rows and $q$ columns overlapping the block by by $t_r^p$ and $t_c^q$.  Thus each such minor has its contribution overall scaled by $\lambda^{(n/2-k')} t_r^p t_c^q$.

Furthermore, as there are only a constant number of variables and the matrix is only constant degree in each, we can simply evaluate the determinant of this symbolically in polynomial time.  This gives us polynomial time computation of the needed expected characteristic polynomials and therefore, by \Cref{interlacingalgo}, the Ramanujan graphs.

\section{Acknowledgements} The author thanks Jonathan Kelner, Daniel Spielman, and Nikhil Srivastava for helpful discussions.  This work was supported by National Science Foundation award 1111109.

\bibliographystyle{alpha}
\bibliography{paper}

\appendix

\section{Quadrature argument}
\quadrature*
We will work up to the proof of this theorem, using results from \cite{mss4}.

Now we consider a random matrix $\hat P_{ijk}$ which randomly permutes three vectors, $\hat e_i$, $\hat e_j$, and $\hat e_k$, forming a standard simplex, while leaving all vectors orthogonal to $\hat e_j - \hat e_i$ and $\hat e_k - \hat e_i$ unaffected.  Similarly, we look at $Q_{ijk}$ which is a random rotation on the two-dimensional subspace spanned by $\hat e_j - \hat e_i$ and $\hat e_k - \hat e_i$.  \cite{mss4} gives us a very useful result about these matrices:
\begin{lemma}[Corollary 4.6 of \cite{mss4}]
\label{triangleswap}
For any fixed symmetric matrices $A$, $B$, the expected characteristic polynomial of $A + \hat P_{ijk} B \hat P_{ijk}^T$ is equal to that of $A + Q_{ijk} B Q_{ijk}^T$.
\end{lemma}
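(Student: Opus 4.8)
This is exactly the statement attributed above to \cite{mss4}, so one option is simply to invoke it; but here is how I would prove it directly. The plan has two stages: first reduce the claim to an identity about a single $2\times 2$ determinant, and then verify that identity by a short trigonometric computation in which the regularity of the simplex is used.

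For the first stage: both $\hat P_{ijk}$ and $Q_{ijk}$ act as the identity on the orthogonal complement of the two-dimensional plane $W = \operatorname{span}(\hat e_j - \hat e_i, \hat e_k - \hat e_i)$, so relative to the splitting $\mathbb R^n = W \oplus W^{\perp}$ each has the block form $M_W \oplus I$ with $M_W$ orthogonal on $W$. Writing $A$ and $B$ in the same block form, the $W^{\perp}$ block of $xI - A - MBM^T$ does not depend on $M$; taking a Schur complement against it (valid for generic $x$) factors the characteristic polynomial as an $M$-independent polynomial in $x$ times the $2\times 2$ determinant
\begin{equation*}
\det\!\bigl(\Gamma(x) - M_W \Sigma(x) M_W^T - M_W \Phi(x) - \Phi(x)^T M_W^T\bigr),
\end{equation*}
where $\Gamma$ and $\Sigma$ come out symmetric and $\Phi$ is arbitrary, all rational in $x$. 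So it suffices to show that the average of this quantity over $M_W$ uniform among the six symmetries of the regular triangle $\{\hat e_i, \hat e_j, \hat e_k\}$ in $W$ equals its average over Haar measure on the rotation group of $W$; since both sides are rational in $x$ it is enough to do this at generic $x$.

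For the second stage, note that $M_W$ enters $M_W \Sigma M_W^T$ through a conjugation (frequencies $0, \pm 2$ in the rotation angle) and enters the cross term linearly (frequencies $\pm 1$), so the $2\times 2$ determinant, being quadratic in those, is a trigonometric polynomial of degree at most $4$ in the rotation angle; as a function on the orthogonal group of $W$ it therefore lies in the span of matrix coefficients of the irreducible-representation types $\mathbf 1, \det, \rho_1, \rho_2, \rho_3, \rho_4$. Averaging over Haar measure on the rotation subgroup retains the $\mathbf 1$ part and the (constant-on-rotations) $\det$ part; averaging over the dihedral group of the triangle retains the $\mathbf 1$ part and the invariant piece of $\rho_3$, and kills $\det$ and every $\rho_m$ with $3 \nmid m$. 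The $\mathbf 1$ parts agree automatically, so everything reduces to the single identity that the $\det$-isotypic content equals the invariant part of the $\rho_3$-isotypic content. In particular the three rotations among the triangle's symmetries are \emph{not} enough by themselves (they only annihilate frequencies not divisible by $3$): the reflections must be shown to cancel exactly the frequency-$\pm 3$ contribution, up to the constant term.

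To check this I would identify $W$ with $\mathbb C$ so that $\hat e_i, \hat e_j, \hat e_k$ are the cube roots of unity: the rotations in the symmetry group act as $z \mapsto \omega^r z$ and the reflections as $z \mapsto \omega^r \bar z$ with $\omega = e^{2\pi i/3}$, and the motion of the traceless part of a symmetric matrix under conjugation is tracked by a complex number that a rotation by $\theta$ scales by $e^{2i\theta}$ and a reflection sends to a unimodular multiple of its conjugate (the multiple depending on the reflection axis). Expanding the $2\times 2$ determinant in these coordinates gives explicit formulas for its frequency-$\pm 3$ Fourier coefficient and for its average over the three reflections, and one verifies directly that their combination makes the full dihedral average collapse onto the constant Fourier coefficient, matching the Haar average over rotations. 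This last verification is the step I expect to be the main obstacle: it is the only place where the simplex being \emph{regular} (the cube-roots-of-unity structure), rather than an arbitrary configuration, enters, and everything else is bookkeeping. Intuitively, the orientation-sensitive, sign-representation content that the continuous rotation average keeps is precisely the content in resonance with the threefold symmetry of the triangle, which is why a regular triangle is what is needed.
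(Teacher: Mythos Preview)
The paper does not prove this lemma at all; it simply cites \cite{mss4}. Your Schur-complement reduction to a $2\times 2$ determinant followed by harmonic analysis on $O(2)$ is a perfectly reasonable route to a self-contained proof, but there is a genuine error in your setup that makes the deferred ``verification'' step fail.

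You read ``random rotation'' as Haar measure on $SO(2)$. With that reading the statement is \emph{false}. Take $n=4$, let $W$ be the span of the first two coordinates, and set
\[
A=B=\begin{pmatrix}0&I_2\\ I_2&0\end{pmatrix}.
\]
Then $A+MBM^T=\begin{pmatrix}0&I+M_W\\ (I+M_W)^T&0\end{pmatrix}$, whose characteristic polynomial is $\det\bigl(x^2 I-(I+M_W)^T(I+M_W)\bigr)$. For $M_W=R_\theta$ this is $(x^2-2-2\cos\theta)^2$, with $SO(2)$ average $x^4-4x^2+6$; for $M_W$ a reflection it is $x^2(x^2-4)$. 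Averaging over the six elements of $D_3$ gives $x^4-4x^2+3\neq x^4-4x^2+6$. In your language, the $\det$-isotypic content is nonzero (it is $6\det\Phi$ after the Schur reduction, and here $\Phi=x^{-1}I$), while the $\rho_3$-content is zero, so the identity you hoped to verify does not hold.

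The intended meaning of $Q_{ijk}$ is Haar on the full $O(2)$ (this is what \cite{mss4} proves, and what the paper needs in the very next lemma, where $Q_{ijk}^{(0)}$ ranges over all orthogonal matrices). With that correction your approach not only works but is much simpler than you anticipated: since $M_W$ is orthogonal, every occurrence of $M_W^T M_W$ in the expansion of the $2\times 2$ determinant collapses to $I$, and a short bookkeeping shows the determinant has trigonometric degree at most~$2$ on each component of $O(2)$ (not~$4$). Thus there is no $\rho_3$ or $\rho_4$ content at all; both the $D_3$ average and the $O(2)$ average kill $\det,\rho_1,\rho_2$ and retain only $\mathbf 1$, and the two averages agree with no residual identity to check.
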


That in turn implies
\begin{lemma}
\label{eatq}
Let $X$ be the random matrix
\begin{equation*}
\begin{pmatrix}
0 & \hat P_{\hat B} \\
\hat P_{\hat B}^T & 0
\end{pmatrix}.
\end{equation*}
and let $Q_{ijk}^{(0)}$ be any orthogonal matrix acting within the span of $\hat e_j - \hat e_i$ and $\hat e_k - \hat e_i$ (in the upper $n/2 \times n/2$ block of the $n \times n$ matrix).  Then for any fixed matrix $M$, the expected characteristic polynomial (over $X$) of $(Q_{ijk}^{(0)})^T M Q_{ijk}^{(0)} + X$ is equal
to that of $M + X$.
\end{lemma}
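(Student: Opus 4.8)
The plan is to reduce the claim to \Cref{triangleswap}. The idea is to isolate, inside the random permutation matrix $\hat P_{\hat B}$, a factor that permutes exactly the three simplex vertices $\hat e_i,\hat e_j,\hat e_k$; use \Cref{triangleswap} to trade that factor for a Haar-random rotation of the plane $V=\operatorname{span}(\hat e_j-\hat e_i,\hat e_k-\hat e_i)$; and then observe that a Haar-random rotation of $V$ absorbs the fixed orthogonal map $Q^{(0)}_{ijk}$, which also acts within $V$, by invariance of Haar measure. Reversing the steps then gives the desired equality of expected characteristic polynomials. Throughout I take $M$ to be symmetric, as it is in the application.

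First I would decompose the randomness of $\hat P_{\hat B}$. Let $H\cong S_3$ be the group of permutations of the three simplex vertices $\hat e_i,\hat e_j,\hat e_k$ (fixing all others), and fix a transversal $T$ for the right cosets $\{Hg\}$ inside the symmetric group on all $l$ simplex vertices. If $\pi_1$ is uniform on $H$ and $\pi_2$ is uniform on $T$, independently, then $\pi_1\pi_2$ is uniform on the whole symmetric group; and since the orthogonal realization $\sigma\mapsto\hat P(\sigma)$ is a group homomorphism, we may write $\hat P_{\hat B}=\hat P(\pi_1)\,\hat P(\pi_2)=\hat P_{ijk}\,\hat P_2$, where $\hat P_{ijk}$ is exactly the random matrix of \Cref{triangleswap} — it permutes $\hat e_i,\hat e_j,\hat e_k$ and fixes everything orthogonal to $V$ — and $\hat P_2$ is independent of it. Writing $\widetilde P_{ijk}$ for the $n\times n$ orthogonal matrix $\operatorname{diag}(\hat P_{ijk},I)$ and $Y=\begin{pmatrix}0&\hat P_2\\ \hat P_2^T&0\end{pmatrix}$, a direct block computation gives $X=\widetilde P_{ijk}\,Y\,\widetilde P_{ijk}^T$.

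Next, for an arbitrary fixed symmetric matrix $N$, I would use conjugation-invariance of the characteristic polynomial, the identity $X=\widetilde P_{ijk}Y\widetilde P_{ijk}^T$, and then conditioning on $\pi_2$ to get
\begin{equation*}
E_X\!\left[\chi(N+X)\right]=E_{\pi_2}\!\left[E_{\pi_1}\!\left[\chi\!\left(\widetilde P_{ijk}^T N\,\widetilde P_{ijk}+Y\right)\right]\right],
\end{equation*}
where $\chi$ denotes the characteristic polynomial. In the inner expectation $Y$ is a fixed symmetric matrix, and since the law of $\pi_1$ is invariant under inversion the inner expectation equals $E_{\pi_1}[\chi(Y+\widetilde P_{ijk}N\widetilde P_{ijk}^T)]$, which is precisely the left-hand side of \Cref{triangleswap} with $A=Y$ and $B=N$. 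Applying that theorem,
\begin{equation*}
E_X\!\left[\chi(N+X)\right]=E_{\pi_2}\!\left[E_{Q_{ijk}}\!\left[\chi\!\left(Y+Q_{ijk}N\,Q_{ijk}^T\right)\right]\right],
\end{equation*}
with $Q_{ijk}$ a Haar-random rotation of $V$ (and the identity outside $V$). Taking $N=M$ yields the right-hand side of the lemma. Taking $N=(Q^{(0)}_{ijk})^T M Q^{(0)}_{ijk}$ yields the left-hand side; and since $Q^{(0)}_{ijk}$ acts within $V$ we have $Q_{ijk}(Q^{(0)}_{ijk})^T M Q^{(0)}_{ijk}Q_{ijk}^T=\widetilde Q\,M\,\widetilde Q^T$ with $\widetilde Q=Q_{ijk}(Q^{(0)}_{ijk})^T$, and by right-invariance of Haar measure on the group of orthogonal maps of $V$ the matrix $\widetilde Q$ is equidistributed with $Q_{ijk}$. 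Hence the two sides coincide, proving the lemma.

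I expect the first step to be the main obstacle: one must verify that a uniform element of $H$ times a uniform transversal element is genuinely uniform on the whole symmetric group (a standard coset fact), that the orthogonal realization is a homomorphism so that this factorization of permutations yields the matrix factorization $\hat P_{\hat B}=\hat P_{ijk}\hat P_2$ with the two factors independent, and that the first factor is literally the random matrix to which \Cref{triangleswap} applies. A secondary technical point is the precise form of \Cref{triangleswap} — whether its $Q_{ijk}$ is Haar on $SO(V)$ or on $O(V)$; if only rotations are allowed, the case where $Q^{(0)}_{ijk}$ is a reflection is handled by the same Haar-invariance argument together with the observation (itself a consequence of \Cref{triangleswap}, since the simplex symmetry group already contains reflections) that the Haar-rotation average above is unchanged when $Y$ is conjugated by such a reflection.
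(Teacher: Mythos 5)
Your proposal is correct and takes essentially the same route as the paper's proof: both reduce to \Cref{triangleswap} and then absorb $Q^{(0)}_{ijk}$ by Haar invariance of $Q_{ijk}$. The only real difference is organizational --- the paper conjugates $X$ by an extra \emph{independent} random $\hat P_{ijk}$ and observes this preserves the law of $X$, whereas you extract the same $S_3$-factor from inside $\hat P_{\hat B}$ via a coset decomposition; your closing remark about rotations versus reflections is a genuine subtlety that the paper's own proof also passes over silently, and your fix for it is sound.
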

\begin{proof}
We first note that the expected characteristic polynomial of $(Q_{ijk}^{(0)})^T M Q_{ijk}^{(0)} + X$ is equal to that of (taking the expectation over $X$ and $P_{ijk}$)
\begin{equation*}
(Q_{ijk}^{(0)})^T M Q_{ijk}^{(0)} + \hat P_{ijk} X \hat P_{ijk}^T
\end{equation*}
since the extra $P_{ijk}$ leaves the probability distribution of $X$ unchanged.  Now, applying \Cref{triangleswap}, this is the same as the expected characteristic polynomial of
\begin{equation*}
(Q_{ijk}^{(0)})^T M Q_{ijk}^{(0)} + Q_{ijk} X Q_{ijk}^T
\end{equation*}
Using invariance of the characteristic polynomial under conjugation, this is the same as the expected characteristic polynomial of
\begin{equation*}
M + Q_{ijk}^{(0)} Q_{ijk} X Q_{ijk}^T (Q_{ijk}^{(0)})^T
\end{equation*}
which has the same distribution as
\begin{equation*}
M + Q_{ijk} X Q_{ijk}^T
\end{equation*}
Finally, applying \Cref{triangleswap} again, we get the expected characteristic polynomial of
\begin{equation*}
M + \hat P_{ijk} X \hat P_{ijk}^T
\end{equation*}
which has the same probability distribution as $M + X$, as desired.
\end{proof}

Now we pull in another result from \cite{mss4}:
\begin{lemma}[Lemma 4.7 of \cite{mss4}]
\label{generates}
Matrices of the form of $Q_{ijk}^{(0)}$--that is, orthogonal transformations affecting only the plane spanned by some three vertices of our simplex $\hat e_i$, $\hat e_j$, and $\hat e_k$--generate the group of all orthogonal transformations within $\hat B$.  In other words, any orthogonal matrix in $\hat B$ can be expressed as a product of a finite number of matrices of this type.
\end{lemma}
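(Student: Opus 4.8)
The plan is to show that the subgroup $G \le O(\hat B)$ generated by all the matrices $Q_{ijk}^{(0)}$ is the \emph{whole} orthogonal group $O(\hat B)$ of the block $\hat B$; by the meaning of ``generated,'' this is exactly the assertion that every orthogonal matrix in $\hat B$ is a product of finitely many generators. Write $\hat l := \dim \hat B$, so our simplex has $l := \hat l + 1$ vertices. I may assume $\hat l \ge 2$ (for $\hat l = 1$ there is nothing to prove), and when $\hat l = 2$ there is a single triple and $Q_{ijk}^{(0)}$ already ranges over all of $O(2) = O(\hat B)$; so assume $\hat l \ge 3$, i.e.\ $l \ge 4$. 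An element of $O(2)$ that reflects a plane and fixes its orthogonal complement has determinant $-1$ in $O(\hat B)$, so $G$ contains such a reflection; hence it suffices to prove $G \supseteq SO(\hat B)$, since then $O(\hat B) = SO(\hat B) \sqcup (\textrm{that reflection})\cdot SO(\hat B) \subseteq G$.

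Next I record the (elementary) geometry. For a triple $S = \{i,j,k\}$ put $P_S := \mathrm{span}(\hat e_j - \hat e_i,\ \hat e_k - \hat e_i)$; since $\hat e_k - \hat e_j = (\hat e_k - \hat e_i) - (\hat e_j - \hat e_i)$, this plane depends only on $S$, and the generators indexed by $S$ are precisely the elements of $O(P_S)$ (extended by the identity on $P_S^\perp$), so $G \supseteq O(P_S)$ for every triple $S$. For a $4$-subset $T = \{i,j,k,m\}$ put $W_T := \mathrm{span}\{\hat e_b - \hat e_a : a,b \in T\}$; because the four vertices of a regular simplex are affinely independent, $\dim W_T = 3$, we have $W_T = P_{ijk} + P_{ijm}$, and $P_{ijk}, P_{ijm}$ are distinct planes sharing the line $\mathbb{R}(\hat e_j - \hat e_i)$. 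Finally $\sum_T W_T$ contains every difference $\hat e_b - \hat e_a$, hence equals $\mathbb{R}^{\hat l}$.

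The technical heart is a Lie-algebra computation: I will show that the Lie subalgebra $\mathfrak g \le \mathfrak{so}(\hat B)$ generated by the one-dimensional subalgebras $\mathfrak{so}(P_S)$ is all of $\mathfrak{so}(\hat B)$. The key sub-lemma is: \emph{if subspaces $W_1, W_2$ of a Euclidean space satisfy $W_1 \cap W_2 \ne \{0\}$, then $\mathfrak{so}(W_1)$ and $\mathfrak{so}(W_2)$ generate $\mathfrak{so}(W_1 + W_2)$ as a Lie algebra.} To see this, identify $\mathfrak{so}(W) \cong \Lambda^2 W$; for a unit vector $u \in W_1 \cap W_2$ and $x_i \in X_i := W_i \cap u^\perp$ one computes $[u \wedge x_1,\ u \wedge x_2] = -\, x_1 \wedge x_2$, so the generated algebra contains $X_1 \wedge X_2$ in addition to $\mathfrak{so}(W_1) \supseteq \Lambda^2 X_1 + u\wedge X_1$ and $\mathfrak{so}(W_2) \supseteq \Lambda^2 X_2 + u\wedge X_2$; combined with the identity $\Lambda^2(X_1+X_2) = \Lambda^2 X_1 + \Lambda^2 X_2 + \mathrm{span}(X_1\wedge X_2)$ and the orthogonal decomposition $W_1 + W_2 = \mathbb{R}u \oplus (X_1+X_2)$, this exhausts $\Lambda^2(W_1 + W_2)$. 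Applying the sub-lemma to $P_{ijk}, P_{ijm}$ (which meet in a line) gives $\mathfrak g \supseteq \mathfrak{so}(W_T)$ for every $4$-subset $T$; and since any two $4$-subsets are joined by a chain of $4$-subsets changing one element at a time, consecutive $W$'s contain the common plane $P_{\{a,b,c\}}$ of the shared triple and hence meet nontrivially, so iterating the sub-lemma along a walk through all $4$-subsets yields $\mathfrak g \supseteq \mathfrak{so}\big(\sum_T W_T\big) = \mathfrak{so}(\hat B)$.

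To conclude, I invoke the standard fact that a connected Lie group is generated, as an abstract group, by any family of connected Lie subgroups whose Lie algebras generate its Lie algebra. Applied to $SO(\hat B)$ with the connected circle subgroups $SO(P_S) \subseteq G$, whose Lie algebras generate $\mathfrak g = \mathfrak{so}(\hat B)$, this shows those subgroups generate $SO(\hat B)$, so every element of $SO(\hat B)$ --- and therefore, by the first paragraph, every element of $O(\hat B)$ --- is a finite product of matrices $Q_{ijk}^{(0)}$. I expect the main obstacle to be the sub-lemma together with the connectivity bookkeeping; the simplex geometry and the final Lie-theoretic step are routine. Intuitively, the first nontrivial instance of the sub-lemma is just the classical fact that rotations about two distinct axes generate $SO(3)$, and the chain argument propagates this across the simplex.
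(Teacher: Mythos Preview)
Your argument is correct. Note that the paper itself does not prove this lemma --- it is imported wholesale from \cite{mss4} (their Lemma~4.7) and used as a black box --- so there is no in-paper proof to compare against beyond that citation. Your Lie-algebra route is sound: the bracket identity $[u\wedge x_1,\,u\wedge x_2]=-x_1\wedge x_2$ holds as stated, the sub-lemma that $\mathfrak{so}(W_1)$ and $\mathfrak{so}(W_2)$ generate $\mathfrak{so}(W_1+W_2)$ whenever $W_1\cap W_2\ne\{0\}$ follows from it exactly as you say, the chain through $4$-subsets is valid because consecutive $W_T$'s share a full plane $P_{\{a,b,c\}}$, and the concluding fact (a connected Lie group is generated by any family of connected subgroups whose Lie algebras generate its Lie algebra) is standard and correctly invoked. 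The argument in \cite{mss4} works more directly at the group level and avoids Lie theory; your approach trades that elementariness for a single uniform mechanism (the $\Lambda^2$ sub-lemma) that makes every inductive step identical and transparent. One minor quibble: for $\hat l=1$ there are no triples and hence no generators, so the statement is vacuous rather than ``nothing to prove''; but this edge case never arises in the application.
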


Putting this together we get
\begin{lemma}
\label{anyshift}
For any orthogonal matrix $Q^{(0)}$ on the block $\hat B$, the expected characteristic polynomial of
\begin{equation*}
M + Q^{(0)} X (Q^{(0)})^T
\end{equation*}
is equal to the expected characteristic polynomial of $M + X$
\end{lemma}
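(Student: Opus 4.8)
The plan is a short induction: use \Cref{generates} to break $Q^{(0)}$ into the special generators, and use \Cref{eatq} to remove them one at a time, absorbing the already-processed part into the fixed matrix. First I would rephrase the goal into a form matching \Cref{eatq}. Since the characteristic polynomial is invariant under conjugation, for every fixed value of $X$, conjugating the matrix $M + Q^{(0)} X (Q^{(0)})^T$ by $(Q^{(0)})^T$ (which is orthogonal, hence gives a similarity transformation, and acts trivially outside $\hat B$) yields $(Q^{(0)})^T M Q^{(0)} + X$ with the same characteristic polynomial. Taking expectations over $X$, it therefore suffices to prove that the expected characteristic polynomial of $(Q^{(0)})^T M Q^{(0)} + X$ equals that of $M + X$.

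Next I would apply \Cref{generates} to write $Q^{(0)} = G_1 G_2 \cdots G_m$, where each $G_t$ is an orthogonal matrix of the special form $Q_{i_t j_t k_t}^{(0)}$ acting only in the plane spanned by three simplex vertices. Define the deterministic matrices $M_0 = M$ and $M_t = G_t^T M_{t-1} G_t$ for $t = 1, \dots, m$; then telescoping gives $M_m = G_m^T \cdots G_1^T M G_1 \cdots G_m = (Q^{(0)})^T M Q^{(0)}$. Because each $M_{t-1}$ is a fixed (deterministic) matrix, \Cref{eatq} applies with its ``$M$'' taken to be $M_{t-1}$ and its special orthogonal matrix taken to be $G_t$, yielding that the expected characteristic polynomial of $M_t + X = G_t^T M_{t-1} G_t + X$ equals that of $M_{t-1} + X$. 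Chaining these $m$ equalities together shows that the expected characteristic polynomial of $M_m + X = (Q^{(0)})^T M Q^{(0)} + X$ equals that of $M_0 + X = M + X$, which is exactly the reduced statement from the first step.

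I do not expect a real obstacle here; the two points that need a little care are (i) the opening conjugation, which converts the ``conjugate $X$'' form of the statement into the ``conjugate $M$'' form in which \Cref{eatq} is stated, and (ii) verifying at each stage of the induction that the matrix handed to \Cref{eatq} is genuinely deterministic. The latter holds precisely because the factors $G_1, \dots, G_m$ produced by \Cref{generates} are fixed matrices, so the partial products $M_t$ are fixed and all of the randomness remains confined to $X$; this is what lets the induction go through with the ``for any fixed matrix $M$'' hypothesis of \Cref{eatq} applied at each step.
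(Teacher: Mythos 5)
Your proposal is correct and follows essentially the same route as the paper: conjugate by $(Q^{(0)})^T$ to reduce to the form $(Q^{(0)})^T M Q^{(0)} + X$, decompose $Q^{(0)}$ into generators via \Cref{generates}, and iterate \Cref{eatq}. The explicit bookkeeping of the deterministic intermediate matrices $M_t$ is a slightly more careful write-up of the paper's ``iterating it implies'' step, but the argument is the same.
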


\begin{proof}
First, use the invariance of the characteristic polynomial under conjugation to express this as the expected characteristic polynomial of
\begin{equation*}
(Q^{(0)})^T M Q^{(0)}
\end{equation*}
Then applying \Cref{generates} expresses $Q^{(0)}$ as a product of matrices of the type of $Q_{ijk}^{(0)}$.  However, \Cref{eatq} says that conjugating $M$ by any of these matrices does not affect the expected characteristic polynomial; thus, iterating it implies that conjugating $M$ by any product of these matrices does not affect it, so the expected characteristic polynomial is that that of $M + X$, as desired.
\end{proof}

Finally, we may prove \Cref{quadrature}:
\begin{proof}[Proof of \Cref{quadrature}]
The probability distribution of
\begin{equation*}
\begin{pmatrix}
0 & Q_{\hat B} \\
Q_{\hat B}^T & 0
\end{pmatrix}
\end{equation*}
is the same as
\begin{equation*}
\begin{pmatrix}
Q_{\hat B} & 0 \\
0 & I
\end{pmatrix} X \begin{pmatrix}
Q_{\hat B} & 0 \\
0 & I
\end{pmatrix}^T
\end{equation*}
since multiplying by $Q_{\hat B}$ will turn any fixed orthogonal matrix into a random orthogonal matrix.  But by \Cref{anyshift}, for any fixed value of $Q_{\hat B}$,
\begin{equation*}
M + \begin{pmatrix}
Q_{\hat B} & 0 \\
0 & I
\end{pmatrix} X \begin{pmatrix}
Q_{\hat B} & 0 \\
0 & I
\end{pmatrix}^T
\end{equation*}
has the same expected characteristic polynomial (randomizing over only $X$) as $M + X$.
Thus this must still be true for random $Q_{\hat B}$, implying that
\begin{equation*}
M + \begin{pmatrix}
0 & Q_{\hat B} \\
Q_{\hat B}^T & 0
\end{pmatrix}
\end{equation*}
has the same expected characteristic polynomial as $M + X$, as desired.
\end{proof}

\end{document}